\newtheorem{theorem}{Theorem}
\begin{document}
\title{On the Universality of Sequential Slotted Amplify and Forward Strategy in Cooperative Communications}
\author{Haishi~Ning,~\IEEEmembership{Student Member,~IEEE,} Cong~Ling,~\IEEEmembership{Member,~IEEE,} and~Kin~K.~Leung,~\IEEEmembership{Fellow,~IEEE}%
\thanks{This work was published in part in the IEEE International Conference on Communications 2009 (ICC 2009), Dresden, Germany, June 2009.}%
\thanks{The authors are with Department of Electrical and Electronic Engineering, Imperial College London, London SW7 2AZ, UK (e-mail: haishi.ning06@imperial.ac.uk; c.ling@imperial.ac.uk; kin.leung@imperial.ac.uk).}}
\maketitle
\begin{abstract}
While cooperative communication has many benefits and is expected to play an important role in future wireless networks, many challenges are still unsolved. Previous research has developed different relaying strategies for cooperative multiple access channels (CMA), cooperative multiple relay channels (CMR) and cooperative broadcast channels (CBC). However, there lacks a unifying strategy that is universally optimal for these three classical channel models. Sequential slotted amplify and forward (SSAF) strategy was previously proposed to achieve the optimal diversity and multiplexing tradeoff (DMT) for CMR. In this paper, the use of SSAF strategy is extended to CBC and CMA, and its optimality for both of them is shown. For CBC, a CBC-SSAF strategy is proposed which can asymptotically achieve the DMT upper bound when the number of cooperative users is large. For CMA, a CMA-SSAF strategy is proposed which even can exactly achieve the DMT upper bound with any number of cooperative users. In this way, SSAF strategy is shown to be universally optimal for all these three classical channel models and has great potential to provide universal optimality for wireless cooperative networks.
\end{abstract}
\begin{keywords}
Cooperative networks, transmission strategy, sequential slotted amplify and forward (SSAF), diversity and multiplexing tradeoff, relay.
\end{keywords}
\section{Introduction}
\label{introduction}
Cooperative communication, which can significantly improve the efficiency and robustness of wireless communication systems, has been a hot research topic recently that attracted a lot of research attention \cite{sen1,sen2,janani,nabar,bletsas,mitran,prasad,kramer,stefanov,liu}. Many relaying strategies for cooperative networks with various topologies have been proposed with the trend of allowing the source nodes to transmit as much as possible and using nonorthogonal signal subspace as much as possible \cite{azarian}. These cooperative strategies are often compared using the diversity and multiplexing tradeoff (DMT) \cite{tse}, which is a fundamental measure that characterizes throughput and error performance simultaneously.

For cooperative multiple access channels (CMA) with multiple cooperative source nodes, a single destination node but without any dedicated relay node, nonorthogonal amplify and forward (NAF) strategy, which protects half of sources' signal by allowing a source node to relay its previously received signal simultaneously with another source node transmitting a new independent message, has been shown to be optimal \cite{azarian}. For cooperative multiple relay channels (CMR) with a single source node, a single destination node and multiple dedicated relay nodes, sequential slotted amplify and forward (SSAF) strategy, which permits each relay node to forward its previously received signal in a specific assigned time slot sequentially, has been shown to be asymptotically optimal \cite{shengyang}. The best known relaying strategy for cooperative broadcast channels (CBC) with a single source node, multiple cooperative destination nodes but without any dedicated relay node is CBC-DDF, which is based on the dynamic decode and forward (DDF) strategy that allows a destination node to help others only after it has successfully decoded the desired information. However, CBC-DDF has been shown to be suboptimal in the high multiplexing gain regime \cite{azarian}.

The optimality of a strategy is defined as the ratio between the DMT upper bound for any strategy and the achievable DMT lower bound for this specific strategy tends to $1$ as the total number of nodes in the network increases. In a wireless cooperative network, one natural DMT upper bound for a destination node is the multiple-input single-output (MISO) DMT upper bound, which is obtained by viewing all other nodes in the network as being physically connected through a genie with perfect knowledge about the source information and willing to devote all their resource to help the destination node to receive the desired information better. For CMA with $N$ source nodes, CMR with $N-1$ relay nodes and CBC with $N$ destination nodes, the MISO DMT upper bound is
\begin{eqnarray}
d(r)=N(1-r)^+.
\end{eqnarray}

The suboptimality of the CBC-DDF strategy motivated us to develop a better transmission strategy for CBC which can approach the DMT upper bound. More importantly, in practical wireless networks, nearly every transmission is a broadcast process, nearly every reception is a multiple access process and a large number of wireless nodes need to function as relays. Thus, we are particularly interested in finding a transmission strategy that is universally optimal for CMR, CBC and CMA. This is an important problem because a wireless node can be a source, a destination or a relay from time to time, and we need a unifying strategy that can provide optimality no matter the specific role of the wireless node. In this paper, we first briefly review the SSAF strategy and some preliminaries. Then, we extend the use of the powerful SSAF strategy, which was previously proved to be optimal for CMR \cite{shengyang}, to CBC and CMA, and show it asymptotically achieves the DMT upper bounds for both of them. In this way, we show SSAF strategy is universally optimal for CMR, CBC and CMA.
\subsection{SSAF strategy}
SSAF strategy was previously proved to be asymptotically optimal for CMR in some special cases \cite{shengyang}. The original SSAF strategy there allows the source to transmit during all the time slots. Then, from the beginning of the second time slot, there is one and only one relay forwarding a scaled version of what it received in the previous slot. To simplify the analysis, the consecutive relays are assumed to be isolated, i.e., the channel gain between two consecutively used relays is approximated as $0$, and the SSAF strategy is shown to achieve the genie-aided DMT upper bound for any SAF scheme [13, Eqn. 2], which itself asymptotically achieves the MISO DMT upper bound as the number of slots increases.

In general, SSAF is a class of transmission strategies with the following properties:
\begin{enumerate}
\item One cooperation frame is composed of multiple time slots.
\item The source nodes are assigned one or more time slots to transmit their independent information in a sequential manner.
\item The relay nodes are assigned one or more time slots to amplify and forward linear combination of their previously received signal in a sequential manner.
\item The destination nodes receive signal from the source and the relay nodes in every time slot when they do not transmit.
\end{enumerate}

SSAF strategy allows at least one source node to transmit a new message in every time slot in order to maximize the multiplexing gain. Moreover, it encourages to use nonorthogonal signal subspace in order to maximize the diversity gain and compensate for the effect of the half-duplex constraint.

The relay nodes using SSAF strategy has low processing complexity because they only need to scale and retransmit their previously received signal. Decoding is only needed at the destination nodes and not until the end of one cooperation frame. The scheduling complexity is low because every node operates sequentially and equally. Thus, the scheduling problem is a direct extension of that of a single source node, a single relay node or a single destination node problem.

One can design a SSAF variant which works for a particular network model. Different SSAF variants may assign different time slots to one node to transmit its own information and forward its previously received signal. The original SSAF strategy in \cite{shengyang} was proposed for CMR with dedicated relay nodes. Since we want to design SSAF variants for CBC and CMA without dedicated relays, we will later propose two SSAF variants which choose the source and destination nodes to act as relay nodes in two different sequential manners. However, general rules and associated attractive properties still hold for every SSAF variant we propose.
\subsection{Preliminaries}
In this paper, we use $s_m,1\leqslant{m}\leqslant{M}$ to denote the $M$ source nodes, $t_n,1\leqslant{n}\leqslant{N}$ to denote the $N$ destination nodes and $e_k,1\leqslant{k}\leqslant{K}$ to denote the $K$ relay nodes. We always omit the subscript of $s_m$, $t_n$ or $e_k$ when there is only one source node, one destination node or one relay node if no confusion can be raised.

Every node is constrained by average energy $E$. All source nodes transmit independent information at the same rate $R$. We use $x_{s_m,l}$ and $x_{e_k,l}$ to represent the transmit signal from the $m$th source node and the $k$th relay node at the $l$th time slot, and we use $y_{e_k,l}$ and $y_{t_n,l}$ to represent the receive signal at the $k$th relay node and the $n$th destination node at the $l$th time slot. $h_{s_m,e_k}$, $h_{s_m,t_n}$ and $h_{e_k,t_n}$ are used to denote the channel gain between the $m$th source node and the $k$th relay node, the channel gain between the $m$th source node and the $n$th destination node and the channel gain between the $k$th relay node and the $n$th destination node. We assume the existing physical links are all quasi-static flat Rayleigh-fading, which means the channel gains are constant during each cooperation frame but change independently between different frames.

The mathematical tools we use are mainly from \cite{tse,laneman1,laneman,azarian,shengyang,telatar}. The transmit SNR of a physical link is defined as $\rho=\frac{E}{\sigma^2}$ where $E$ is the average signal energy at the transmitter and $\sigma^2$ is the noise variance at the receiver. We say $b$ is the exponential order of $f(\rho)$ if $\lim_{\rho\rightarrow\infty}\frac{\log(f(\rho))}{\log(\rho)}=b$ and denote $f(\rho)$ as $f(\rho)\dot{=}\rho^b$. $\dot{\leqslant}$ and $\dot{\geqslant}$ are similarly defined.
Consider a coding scheme as a family of codes $\{C(\rho)\}$ with data rate $R(\rho)$ bits per channel use (BPCU) and average maximum-likelihood (ML) error probability $P_E(\rho)$. The multiplexing gain $r$ and the diversity gain $d$ are defined as \cite{tse}
\begin{eqnarray}
r=\lim_{\rho\rightarrow\infty}\frac{R(\rho)}{\log(\rho)},\quad d=-\lim_{\rho\rightarrow\infty}\frac{\log(P_E(\rho))}{\log(\rho)}.
\label{eqn:basicdef}
\end{eqnarray}
\section{CBC-SSAF strategy}
\begin{figure}
    \centering
    \includegraphics[width=6cm]{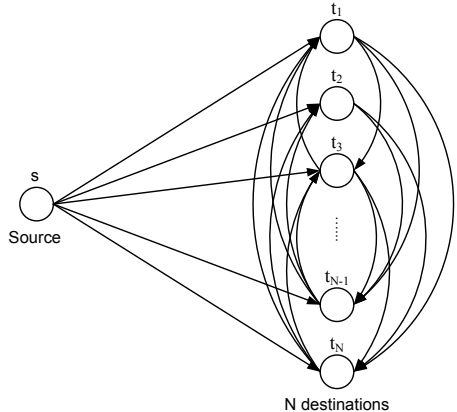}
    \caption{Cooperative broadcast channel with a single source node and $N$ cooperative destination nodes.}
    \label{fig:cbc}
\end{figure}
In this section, we consider the cooperative broadcasting problem as shown in Fig. \ref{fig:cbc}, where a single source node $s$ wants to broadcast its messages (may contain common and individual messages) to all the $N$ destination nodes $t_1,t_2,...,t_N$. We propose a near-optimal relaying strategy CBC-SSAF for CBC, based on the class of SSAF strategies. CBC-SSAF allows each destination node to act in turn as a relay node and forward its previously received signal to other destination nodes. This is similar to a single-cell downlink phase with cooperative users.

Before giving details about the proposed CBC-SSAF strategy, we firstly point out that due to the half-duplex constraint, CBC-SSAF is not a full multiplexing gain strategy. This is because in one cooperation frame, each destination node has to spend one time slot not listening in order to act as a relay node. Thus, our strategy cannot achieve the full multiplexing gain $1$ with finite time slots. When the number of time slots goes to infinity, this loss is negligible and thus CBC-SSAF is asymptotically optimal.

Secondly, it is easy to see that the highest probability of error (which corresponds to a DMT lower bound) occurs when the source node $s$ only transmits common messages and all the destination nodes want to decode every transmitted message from the source \cite{azarian}. Thus, we only investigate this worst case scenario which corresponds to a DMT lower bound.

Now, we detail the proposed CBC-SSAF strategy as follows:
\begin{enumerate}
\item One cooperation frame contains $N+1$ different time slots.
\item $s$ keeps transmitting a new message $x_{s,l}$, for $l=1,2,...,N+1$, in each corresponding time slot.
\item In the first time slot, select the destination node $t_1$ as the first relay node such that the channel gain between $s$ and $t_1$ is larger than the channel gains between $s$ and other destination nodes. The first relay node receives the strongest signal from the source node.
\item In the $l$th time slot, for $2\leqslant{l\leqslant{N}}$, select the destination node $t_l$ as the $l$th relay node such that the channel gain between the $(l-1)$th relay node and the $l$th relay node is as small as possible. Thus, the $l$th relay node keeps silent in the $l$th time slot to receive a cleaner signal from $s$ with smallest possible interference from the previous relay node.
\item In the $(l+1)$th time slot, for $1\leqslant{l\leqslant{N}}$, the selected $l$th relay node broadcasts a normalized version of its received signal in the previous time slot.
\end{enumerate}

We dynamically order the destination nodes to act as relay nodes before the start of each cooperation frame, such that the next effective relay (NER) is chosen as the worst destination for the current effective relay (CER). The purpose of such relay pre-ordering is to separate consecutively used relays as much as possible by making the channel gain between two consecutively used relays as small as possible. In other words, the relay pre-ordering here can be viewed as a method to mimic the isolated relays assumption in \cite{shengyang} which significantly simplifies the analysis.

A simple method to implement the relay pre-ordering operations can be done using Algorithm \ref{algorithm:cbc}. The timing structure after the relay pre-ordering operations is shown in Fig. \ref{fig:timing}, where solid boxes denote transmitted signal and dashed boxes denote received signal. Note that the relay pre-ordering algorithm here needs to estimate $N$ source-to-destination channel gains and $N(N-1)$ destination-to-destination channel gains. Thus, a central scheduler which collects all these channel state information may generate overwhelming overhead. Fortunately, due to the wireless broadcast nature, the scheduling task can be done distributively at the source and destination nodes as follows:

Before the start of each cooperation frame, let each destination sequentially broadcasts a short \textit{probe frame}. These operations consume $N$ \textit{probe frame} time slots. After reception of these \textit{probe frames}, the source node can estimate the $N$ source-to-destination channel gains and choose the first relay node based on step $4$ in Algorithm \ref{algorithm:cbc}. Because of the wireless broadcast nature, a destination node can at the same time receive $N-1$ \textit{probe frames} and estimate its local $N-1$ destination-to-destination channel gains. If we assume the wireless reciprocity property holds, a destination can use this information to choose its NER based on step 6 in Algorithm \ref{algorithm:cbc}. Then, each destination sends back a short \textit{feedback frame} to the source with only its chosen NER's unique ID embedded in it. These operations consume $N$ \textit{feedback frame} time slots. After decoding these \textit{feedback frames}, the source node can construct a linked list locally and the relay pre-ordering operations can be now completely. In each \textit{data frame} time slot, the source only needs to embed a unique ID in its signature. Each destination node extracts this ID and if it matches its own, it acknowledges it should function as a relay in the next \textit{data frame} time slot. The extra payloads of these relay pre-ordering operations are $N$ \textit{probe frame} time slots and $N$ \textit{feedback frame} time slots, which can be well assumed to be much shorter than the $N+1$ \textit{data frame} time slots. Moreover, considering the underlying block fading assumption, the cost of the scheduling algorithm is negligible.
\begin{algorithm}
\caption{Relay pre-ordering algorithm for CBC-SSAF strategy.}
\label{algorithm:cbc}
\begin{algorithmic}[1]
\STATE Before each cooperation frame, set $i=1$ and $\mathcal{T}=\{t_1,t_2,...,t_N\}$
\WHILE {$i\neq{N+1}$}
\IF {$i=1$}
    \STATE Choose $t_m\in\mathcal{T}$ such that $h_{s,t_m}\geqslant{}h_{s,t_n}$, $\forall{}t_n\in{\mathcal{T}}$, $n\neq{m}$
\ELSE
    \STATE Choose $t_m\in\mathcal{T}$ such that $h_{t_{i-1},t_m}\leqslant{}h_{t_{i-1},t_n}$, $\forall{}t_n\in{\mathcal{T}}$, $n\neq{m}$
\ENDIF
\STATE Swap the indexes for $t_m$ and $t_{i}$
\STATE Delete $t_{i}$ from $\mathcal{T}$
\STATE{$i=i+1$}
\ENDWHILE
\end{algorithmic}
\end{algorithm}
\begin{figure}
    \centering
    \includegraphics[width=12cm]{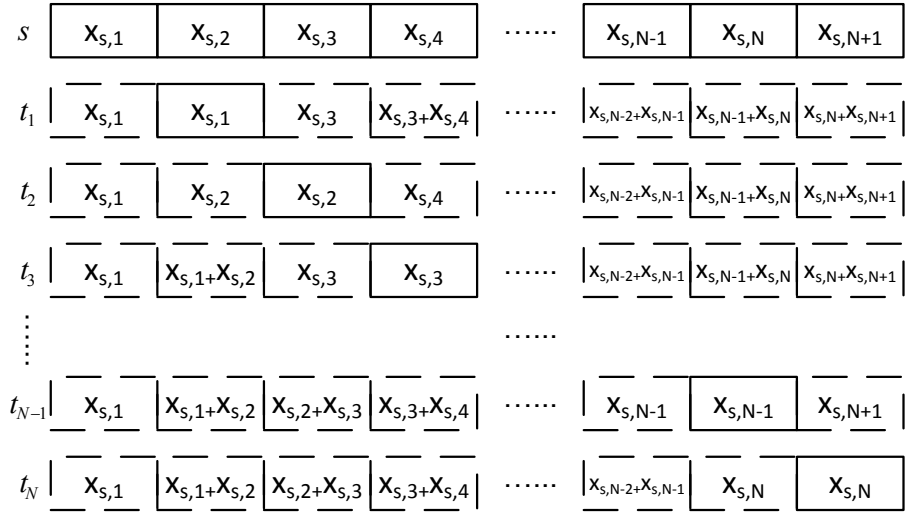}
    \caption{Timing structure after the relay pre-ordering operations for CBC-SSAF strategy, where solid boxes denote transmitted signal and dashed boxes denote received signal.}
    \label{fig:timing}
\end{figure}

From the strategy description, we can easily see that the diversity gains for $t_1$ and $t_N$ are not worse than the diversity gains for other destination nodes. This is because for $t_1$ and $t_N$, there are two slots of source's signal (the first and the last slots, and the first and the second last slots, respectively) not protected by extra paths. While for other destination nodes, three slots of source's signal (the first, the $l$th and the last slots) are not protected by extra paths. For the scenario with only two destination nodes, each one has the same DMT characteristic. Thus, we only focus on the worst achievable DMT for the destination node $t_l$, for $2\leqslant{}l\leqslant{N-1}$, rather than $t_1$ or $t_N$, to give a DMT lower bound.

\begin{theorem}
The achievable DMT lower bound of CBC-SSAF for a CBC with a single source node and $N$ cooperative destination nodes but without any dedicated relay node is
\begin{eqnarray}
d(r)>[(N-3)-(N+1)r]^++(1-\frac{N+1}{N}r)^+.
\label{eqn:dmt}
\end{eqnarray}
\end{theorem}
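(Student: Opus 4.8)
\emph{Proof plan.} The plan is to carry over the diversity--multiplexing analysis of SSAF for CMR in \cite{shengyang}, while tracking the extra loss created by the half-duplex ``holes'' that are specific to CBC-SSAF. The first step is to write down the channel seen by a worst-case destination $t_l$, $2\leqslant l\leqslant N-1$. Stacking the source symbols $x_{s,1},\dots,x_{s,N+1}$ into a vector $\mathbf{x}$ and the $N$ slots that $t_l$ actually receives (every slot except slot $l+1$, in which $t_l$ itself relays) into $\mathbf{y}$, one obtains $\mathbf{y}=\mathbf{H}\mathbf{x}+\mathbf{z}$ with an essentially bidiagonal $\mathbf{H}$: the direct gain $h_{s,t_l}$ occupies the ``main'' position of each row, the amplify-and-forward product gain $b_{j-1}h_{t_{j-1},t_l}h_{s,t_{j-1}}$ occupies the neighbouring position, and the noise $\mathbf{z}$ is coloured because each relayed slot also carries the forwarded thermal noise weighted by $b_{j-1}h_{t_{j-1},t_l}$.

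The second step is to exploit the relay pre-ordering of Algorithm~\ref{algorithm:cbc}, which is built precisely to emulate the isolated-relay assumption of \cite{shengyang}: because $t_j$ is selected so that $h_{t_{j-1},t_j}$ is as small as possible, the coupling between two consecutively used relays is negligible in the exponential-order sense, so that (i) every intermediate relay forwards an essentially interference-free copy of one source symbol, which keeps $\mathbf{H}$ bidiagonal rather than a wider band matrix, and (ii) the entry that would have linked $x_{s,l-1}$ to slot $l$ through $h_{t_{l-1},t_l}$ drops out. As a result $\mathbf{H}$ becomes block-diagonal: an $(l-1)\times(l-1)$ bidiagonal block carrying $x_{s,1},\dots,x_{s,l-1}$, a scalar direct-only sub-channel $h_{s,t_l}$ carrying $x_{s,l}$, and an $(N-l)\times(N+1-l)$ (hence row-deficient) bidiagonal block carrying $x_{s,l+1},\dots,x_{s,N+1}$ --- the last block being ``wide'' by one because $x_{s,l+1}$ survives only through a relay and $x_{s,N+1}$ only through the direct link. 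After whitening, the frame mutual information with i.i.d.\ Gaussian inputs is the sum $I=I_1+I_0+I_2$ of the three sub-channels' mutual informations, and an outage event --- which yields a lower bound on the achievable DMT --- occurs whenever $I<(N+1)r\log\rho$, a cooperation frame spanning $N+1$ channel uses.

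The third step is the high-SNR calculation. I would parametrise each physical link by its Rayleigh exponential order, so that $|h_{s,t_l}|^2\doteq\rho^{-\alpha}$ and each whitened relay branch contributes a gain of order $\rho^{-\max(\sigma_j,\beta_j)}$, the familiar amplify-and-forward bottleneck exponent of the $s\to t_j\to t_l$ path; the probability of an exponent vector is $\rho$ raised to minus the sum of its positive parts, and the max-selection used for $t_1$ together with the min-selections used for the other relays can only decrease the outage probability and so may be dropped when proving a lower bound. Since the determinant of a bidiagonal matrix is the product of its diagonal entries, the mutual informations $I_1,I_2$ (and trivially $I_0$) are governed by products of a string of direct gains and a string of relay-branch gains; combined with a sufficiently simple lower bound on each $I_i$ (loose enough to keep the optimisation tractable, which is also why the final inequality is strict) the outage event turns into an explicit polyhedron in the exponent vector. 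The DMT lower bound is then the minimum of the total exponent over that polyhedron, a piecewise-linear program; its value separates into a contribution from the two bidiagonal (relay-protected) blocks --- which together carry $N$ of the $N+1$ source symbols in only $N-1$ received slots and must absorb the diversity penalty of the missing slot $l+1$ and the degenerate seam symbols, yielding $[(N-3)-(N+1)r]^+$ with $N+1$ the frame-length normalisation --- and a contribution from the lone direct link carrying the unprotected symbol over its $\tfrac{N}{N+1}$ share of the frame, yielding $(1-\tfrac{N+1}{N}r)^+$. Adding the two gives \eqref{eqn:dmt}.

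The step I expect to be the main obstacle is this last one: setting up and solving the min-total-exponent program over the outage polyhedron so that the two $(\cdot)^+$ terms emerge with exactly the stated slopes and intercepts --- in particular showing that the several weakly protected symbols around the half-duplex hole and at the frame boundary cost a net of only two units of diversity (so that $d(0)>N-2$), handling the row-deficient block without double-counting, and disentangling the parallel-channel trade-off between the bidiagonal part and the direct-link bottleneck. A secondary difficulty is making the isolated-relay reduction of the second step rigorous at the level of exponential orders rather than exact gains; this is also the source of the strict inequality in \eqref{eqn:dmt} and of the ``asymptotically optimal'' qualifier, since the gap to the MISO bound $N(1-r)^+$ closes only as $N\to\infty$.
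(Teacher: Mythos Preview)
Your high-level strategy --- isolated-relay reduction via the pre-ordering, exponential-order outage analysis in the style of \cite{azarian,shengyang}, and a piecewise-linear program over the fading exponents --- is exactly what the paper does. Two concrete points, however, are off and would prevent your ``third step'' from producing the stated bound.

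First, the block structure. Under the isolated-relay approximation you invoke, not only $h_{t_{l-1},t_l}$ but also $h_{t_{l+1},t_l}$ vanishes (reciprocity). Hence $x_{s,l+1}$ is lost \emph{completely} at $t_l$: it is sent in slot $l{+}1$ (when $t_l$ transmits) and would be relayed by $t_{l+1}$ in slot $l{+}2$ through a zero link. Your block~2 is therefore $(N{-}l)\times(N{-}l)$, not wide by one; the paper's effective channel $\tilde{\mathbf H}_{t_l}$ is square $N\times N$ with blocks $\mathbf A_{t_l}$ and $\mathbf B_{t_l}$, and there are \emph{three} direct-only slots $k=1,l,l{+}2$, not one.

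Second, and more importantly, the paper does \emph{not} obtain the two terms by attributing them to different sub-channels, and your determinant route will not do it either. For a square lower-bidiagonal matrix the determinant is just $h_{s,t_l}^{\,\text{(size)}}$ --- the relay entries drop out, so a determinant-based lower bound captures no relay diversity at all. The paper instead lower-bounds the frame mutual information \emph{slot by slot}: for a relay-assisted slot $k$, $I(\check{\mathbf x}_{t_l,k};y_{t_l,k})\doteq\bigl(\max\{1-v_l,\,1-v_{k-1}-u_{k-1}\}\bigr)^+\log\rho$, and for the three direct-only slots it is $(1-v_l)^+\log\rho$. From this single outage event the paper extracts \emph{two separate necessary conditions}: (i) since every one of the $N$ received slots contributes at least $(1-v_l)^+$, outage forces $N(1-v_l)^+<(N{+}1)r$, i.e.\ $v_l>1-\tfrac{N+1}{N}r$; (ii) since the $N{-}3$ relay-assisted slots contribute at least $\sum(1-v_{k-1}-u_{k-1})^+$, outage forces $\sum(v_{k-1}+u_{k-1})>(N{-}3)-(N{+}1)r$. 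The objective in $d_O(r)$ is exactly $v_l+\sum(v_{k-1}+u_{k-1})$, so adding (i) and (ii) gives the theorem. Your ``scalar block $\to(1-\tfrac{N+1}{N}r)^+$'' reading cannot be right: one direct-only slot at rate $r$ gives $(1-r)^+$; the factor $\tfrac{N+1}{N}$ arises precisely because the direct link $h_{s,t_l}$ sits on the diagonal of \emph{all} $N$ received rows, not because of the single unprotected symbol.
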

\begin{proof}
Please refer to Appendix A.
\end{proof}

Compared to currently best known relaying strategy for cooperative broadcast channels CBC-DDF, CBC-SSAF offers tremendous DMT improvement in the high multiplexing gain regime. This is the region where the CBC-DDF strategy is incapable to beat direct transmission strategy. As shown in Fig. \ref{fig:cbcdmt}, the DMT analytical results state that the CBC-SSAF strategy approaches the DMT upper bound as $N$ increases and is thus asymptotically optimal. Moreover, from the outage behaviors in Fig. \ref{fig:outagecbc}, it is easy to see the diversity order of CBC-SSAF is almost dominantly better than those of non-cooperative and CBC-DDF strategies in the high spectral efficiency regime.
\begin{figure}
    \centering
    \subfigure[20 users.]{
    \includegraphics[scale=0.43]{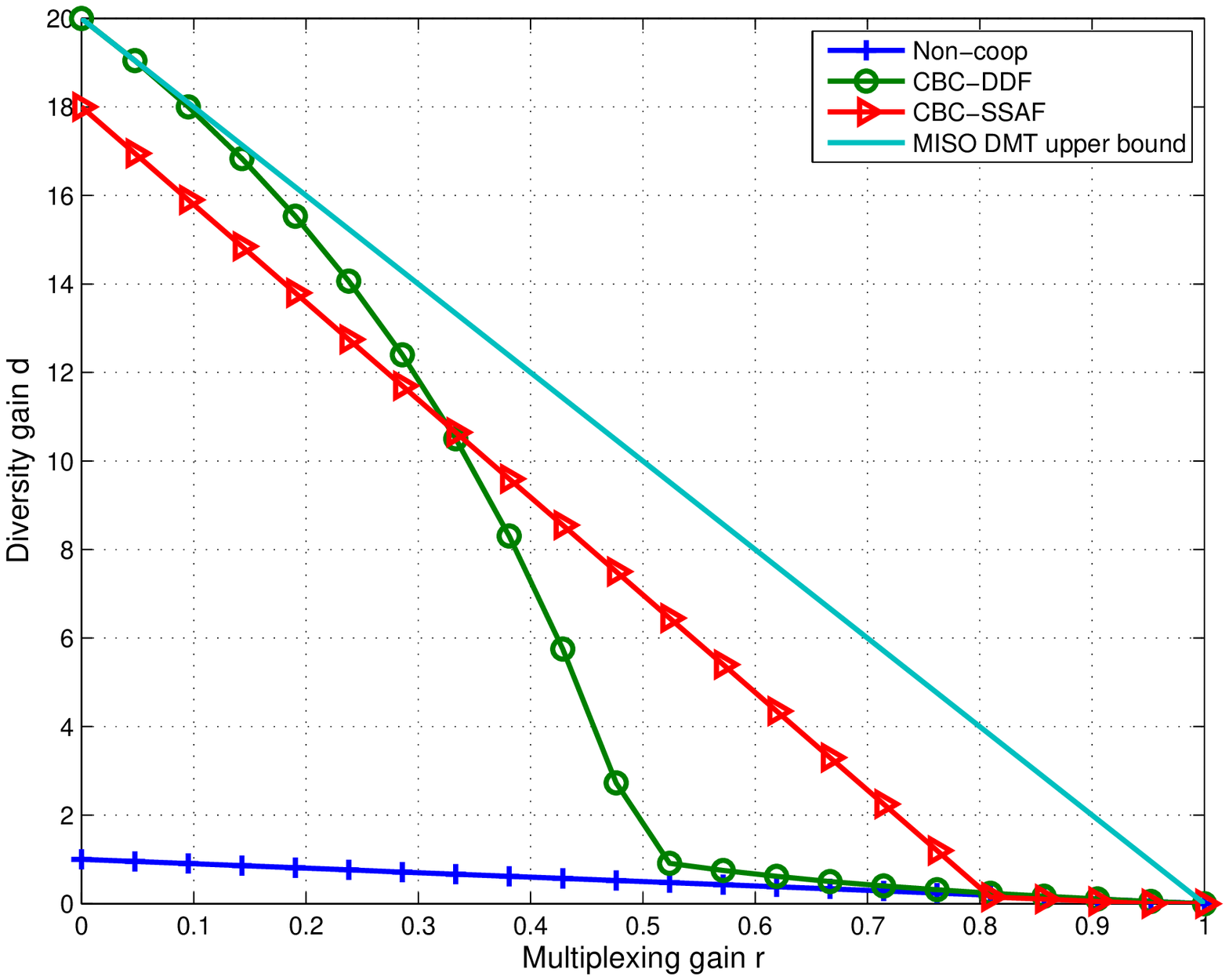}
    }
    \subfigure[50 users.]{
    \includegraphics[scale=0.43]{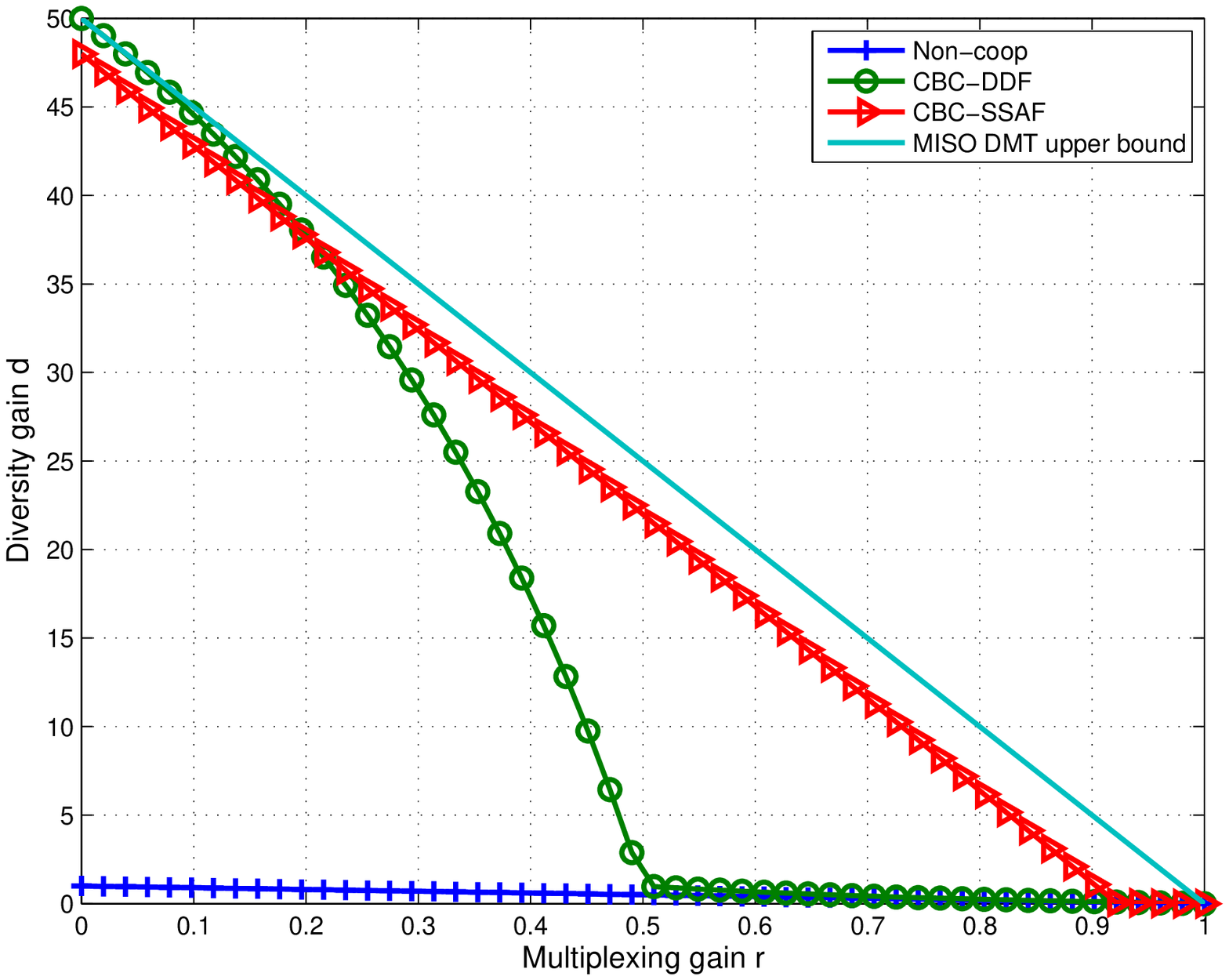}
    }
    \caption{DMTs of non-cooperative, CBC-DDF and CBC-SSAF strategies.}
    \label{fig:cbcdmt}
\end{figure}

It is worth spending some effort to explain the gap between the DMT lower bound and the MISO DMT upper bound. Firstly, we emphasize again \eqref{eqn:dmt} is an achievable DMT lower bound after the relay pre-ordering using Algorithm \ref{algorithm:cbc}, whose purpose is to separate consecutively used relays as much as possible. Secondly, we should note after the relay pre-ordering, the channel gain between two consecutively used relays is chosen as small as possible and we approximate their interaction as small noise enhancement for analytical simplicity which does not affect the DMT. This, however, will result losing some diversity gains compared to the genie-aided MISO case. It is indeed such ordering and approximation that generate a constant gap against the MISO upper bound. The gap is not something inherent for every SSAF strategy (as was shown in \cite{shengyang}), it is however so for CBC-SSAF because of the special structure of CBC and the relay pre-ordering algorithm with approximation. As $N$ becomes large enough, the marginal gain becomes negligible and the ratio between the MISO upper bound and the derived DMT lower bound tends to $1$. Thus, it is safe to say the proposed CBC-SSAF strategy is asymptotically optimal. Without the relay pre-ordering algorithm, channel gains between consecutively used relays might not be small enough to be approximated as being in noise level, and in such cases, calculating the DMT might be mathematically prohibited as stated in \cite{shengyang}. Finally, we should note the relay pre-ordering algorithm is only used to simplify the analysis and the exact DMT of CBC-SSAF lies somewhere between the lower and upper bounds which may not be necessarily dependent on the ordering.
\begin{figure}
    \centering
    \subfigure[3 users.]{
    \includegraphics[scale=0.43]{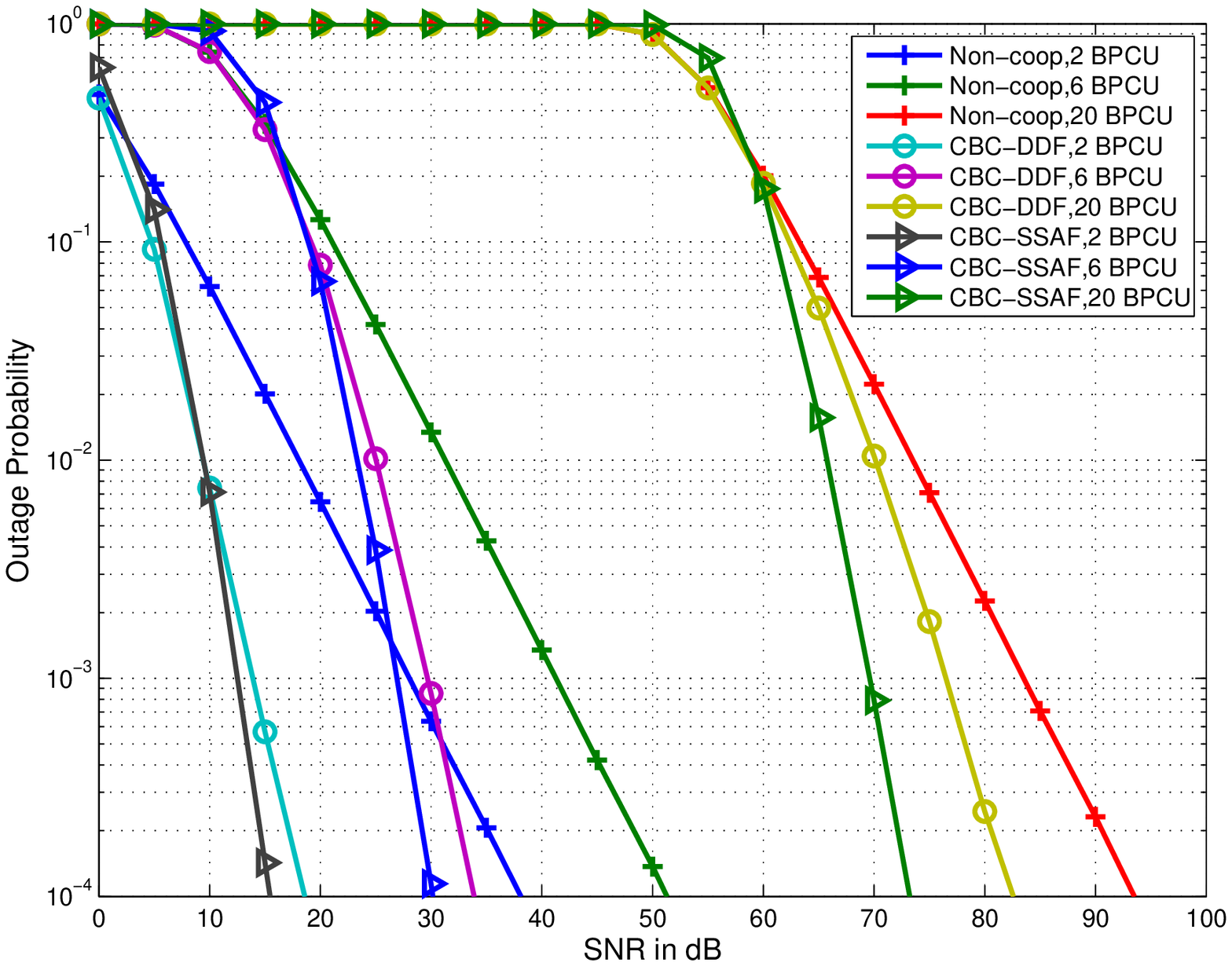}
    }
    \subfigure[10 users.]{
    \includegraphics[scale=0.43]{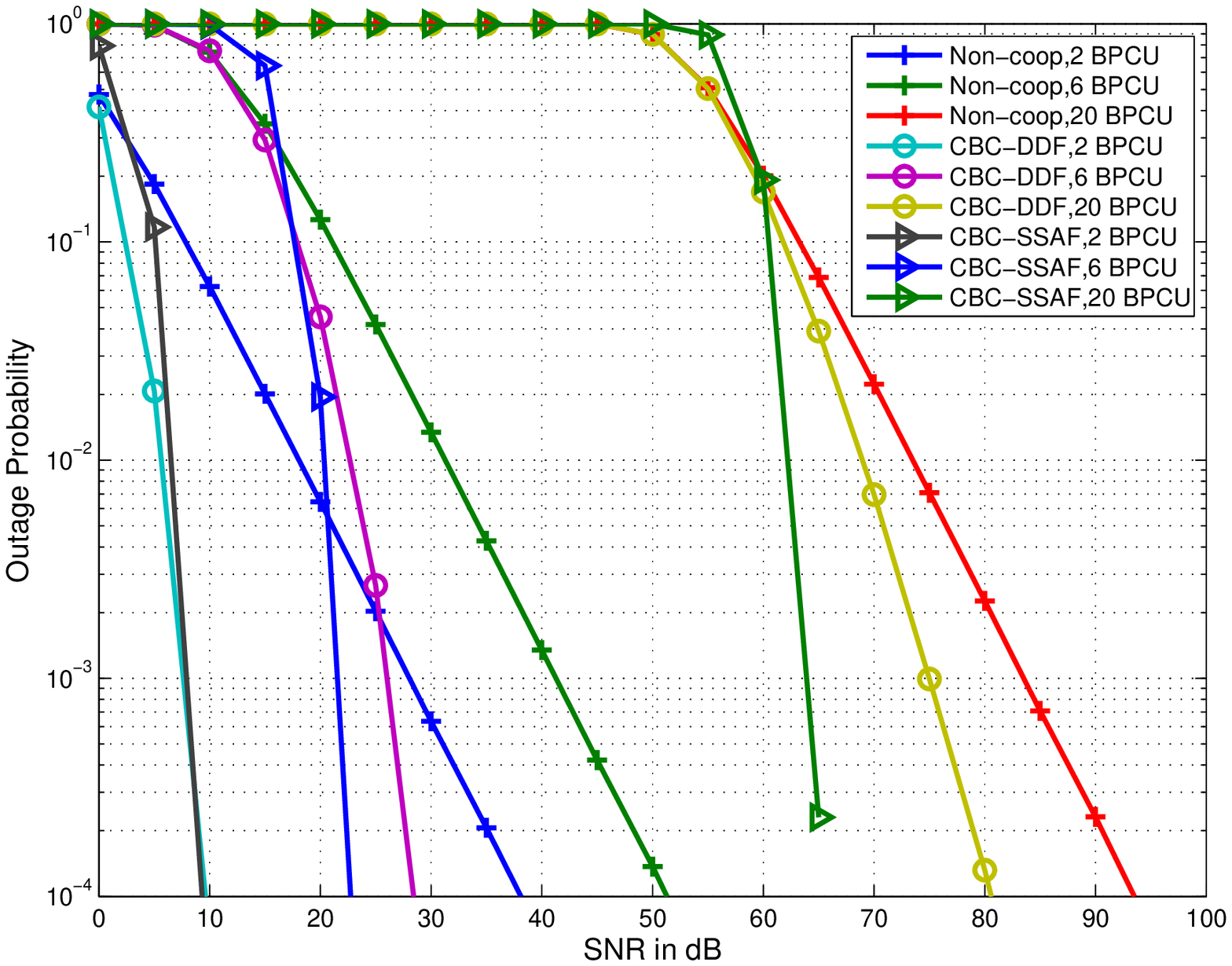}
    }
    \caption{Outage behaviors of non-cooperative, CBC-DDF and CBC-SSAF strategies.}
    \label{fig:outagecbc}
\end{figure}
\section{CMA-SSAF strategy}
\begin{figure}
    \centering
    \includegraphics[width=6cm]{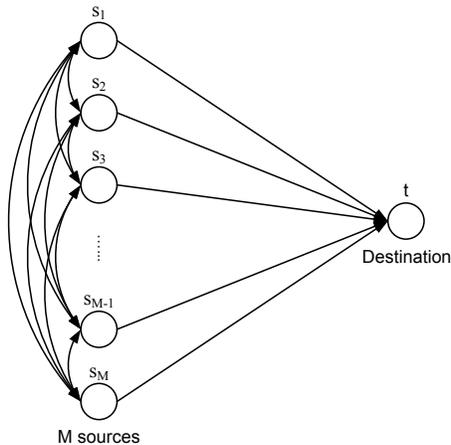}
    \caption{Cooperative multiple access channel with $N$ source nodes and a single destination node.}
    \label{fig:cma}
\end{figure}
Inspired by the fact that the Gaussian multiple access channel and the Gaussian broadcast channel are fundamentally related and are duals to each other \cite{jindal,jafar}, one can naturally conjecture that SSAF strategy, which has been shown to be asymptotically optimal for CBC may also be asymptotically optimal for CMA. In order to prove this conjecture, we slightly modify the original SSAF strategy and propose a CMA-SSAF strategy which will be shown to indeed achieve the DMT upper bound for CMA.

The cooperative multiple access problem under consideration is illustrated in Fig. \ref{fig:cma}. As shown in the figure, $M$ source nodes $(s_1,s_2,......,s_M)$ want to transmit their independent information to a single destination node $t$. Again, we assume the existing physical links are all quasi-static flat Rayleigh-fading. This scenario is similar to a single-cell uplink phase with cooperative users.

Firstly, we point out that in contrast to CBC-SSAF, CMA-SSAF is a full multiplexing gain strategy. In CBC-SSAF strategy, since the source node keeps transmitting a new message in every time slot in order to maximize the multiplexing gain, the destination nodes are busy all the time listening from the source node. If a destination node wants to help others to achieve better performance, it has to sacrifice at least one time slot to transmit but not to listen due to the half duplex constraint. Thus, although CBC-SSAF is asymptotically optimal for CBC, it is not a full multiplexing strategy. In CMA-SSAF strategy, since there is a single destination node $t$, the total achievable multiplexing gain can not be more than $1$. Thus, on average, each user is busy only $1/M$ of the time transmitting its own independent information, and idle $(M-1)/M$ of the time. If we allow each user to help others by relaying its received signal from other source nodes to the destination node in its own idle time, the diversity gain can be improved without losing the achievable multiplexing gain. This explains why CMA-SSAF strategy is optimal with full multiplexing gain.

Secondly, we note that in order to achieve the maximum diversity gain, each source node's messages must be relayed by every other source node. Thus, each source node needs to help others by relaying its received signal from other source nodes ( $(M-1)$ in total) to the destination node. Fortunately, as each source node is idle $(M-1)/M$ of the time, this task can be done without losing its achievable multiplexing gain. In order to reduce the scheduling complexity, we allow a source node to act as a relay node and forward every other source node's signal it received to the destination node, in a single time slot. The main purpose is to maximize the use of nonorthogonal signal subspace in order to save the effort for complex scheduling, and because there is a single destination node, the use nonorthogonal signal subspace does not reduce the achievable multiplexing gain.

Now, we detail the proposed CMA-SSAF strategy as follows:
\begin{enumerate}
\item One cooperation frame contains $2M$ time slots.
\item In the $l$th time slot, for $l=1,2,...,M$, source node $s_l$ transmits a new message $x_{s_l,l}$, other source nodes receive noisy versions of the same message in the same time slot.
\item In the $(M+l)$th time slot, for $l=1,2,...,M$, select source node $s_l$ as the $l$th relay node and let it transmit signal which is combination of its own new message $x_{s_l,M+l}$ and its received signal from other source nodes in the first half of the cooperation frame.
\end{enumerate}

Of course, we need to divide the total power to a source node's own signal and its relayed signal. However, the performance difference by using different power allocation schemes can only be told in the low SNR regime. In the high SNR regime, because the number of users is finite, a simple equal power allocation scheme suffices to generate the same DMT as that of an optimal power allocation scheme. Thus, in a source node's relaying time slot, we use an equal power allocation scheme which allocates equal power to a source node's own signal and every other signal it relays.
\begin{theorem}
The DMT of CMA-SSAF strategy for a CMA with $M$ cooperative source nodes and a single destination node but without any dedicated relay node is
\begin{eqnarray}
d(r)=M(1-r).
\end{eqnarray}
\end{theorem}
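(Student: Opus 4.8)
The plan is to prove the two inequalities separately. The upper bound $d(r)\le M(1-r)$ is immediate: it is exactly the MISO DMT upper bound $N(1-r)^{+}$ with $N=M$, which holds for \emph{any} strategy on a CMA with this many nodes and hence for CMA-SSAF (and for $r\in[0,1]$ the truncation is inactive). For the matching achievability bound $d(r)\ge M(1-r)$ I would run an outage analysis on the effective channel induced by one cooperation frame, and then invoke the standard fact \cite{tse} that i.i.d.\ Gaussian random codes over the frame achieve the DMT determined by the outage probability.

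First I would collect the destination's $2M$ received samples in a frame into $\mathbf{y}=\mathbf{H}\mathbf{x}+\mathbf{w}$, where $\mathbf{x}$ stacks the information-bearing symbols, $\mathbf{H}$ is built from the direct gains $h_{s_{l},t}$ and the two-hop amplify-and-forward paths $\beta\,h_{s_{j},s_{l}}h_{s_{l},t}$ (the $\beta$'s being the equal-power normalization coefficients), and $\mathbf{w}$ is Gaussian with a diagonal, channel-dependent covariance $\mathbf{R}_{\mathbf{w}}$ because amplify-and-forward also relays the sources' receiver noise. The structural point I would exploit is that a symbol is forwarded only in the retransmission slot of the source that relays it, so after ordering the columns of $\mathbf{H}$ by source and by round the matrix is block triangular with the direct gains on its diagonal; hence $\det(\mathbf{H}\mathbf{H}^{\dagger})\doteq\prod_{l=1}^{M}|h_{s_{l},t}|^{4}$.

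Next I would pass to channel-gain exponents $|h_{\cdot}|^{2}\doteq\rho^{-\alpha_{\cdot}}$, $\alpha_{\cdot}\ge 0$, so that $P_{\mathrm{out}}\doteq\rho^{-d_{\mathrm{out}}}$ with $d_{\mathrm{out}}=\inf\sum\alpha_{\cdot}$ over the realizations causing outage. Writing $\rho^{-\gamma_{i}}$ for the exponents of the eigenvalues of $\mathbf{R}_{\mathbf{w}}^{-1}\mathbf{H}\mathbf{H}^{\dagger}$, the frame mutual information is $I\doteq\log\rho\sum_{i=1}^{2M}(1-\gamma_{i})^{+}$, while $\sum_{i}\gamma_{i}$ is minus the exponential order of $\det(\mathbf{R}_{\mathbf{w}}^{-1}\mathbf{H}\mathbf{H}^{\dagger})$. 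Combining $\sum_{i}(1-\gamma_{i})^{+}\ge 2M-\sum_{i}\gamma_{i}$ with the determinant identity, an outage at multiplexing gain $r$ forces $\sum_{l}\alpha_{s_{l},t}$, hence $\sum\alpha_{\cdot}$, to exceed $M(1-r)$; the minimizing realization is the one in which every direct link $s_{l}\to t$ sits in a fade of exponent $1-r$ while the other links are healthy, and it costs exactly $M(1-r)$. I would also check the remaining multiple-access error events (a single source decoded wrongly given the others by a genie, and the intermediate-subset events): since the scheme has each source's message relayed by every one of the other $M-1$ sources, that message reaches $t$ over $M$ independent fading branches, so each of these events costs exponent at least $M(1-r)$ as well and the sum-rate event is the binding one. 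Together with the upper bound this gives $d(r)=M(1-r)$.

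The step I expect to be the main obstacle is controlling the amplify-and-forward noise enhancement: $\det\mathbf{R}_{\mathbf{w}}$ can acquire a strictly positive exponential order when an inter-source link is strong while the corresponding direct link is weak, which would weaken $\sum_{i}(1-\gamma_{i})^{+}\ge 2M-\sum_{i}\gamma_{i}$. I would resolve this by showing that on the channel regime that actually dominates $P_{\mathrm{out}}$ (deep fades on the $s_{l}\to t$ links, healthy relaying links) one has $\det\mathbf{R}_{\mathbf{w}}\doteq\rho^{0}$, and that any realization inflating $\mathbf{R}_{\mathbf{w}}$ either escapes the outage region or already pays a larger $\sum\alpha_{\cdot}$; this is consistent with the text's observation that with finitely many users equal power allocation is DMT-optimal, so the noise enhancement ultimately does not change the outage exponent.
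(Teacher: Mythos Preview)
Your argument is correct and arrives at the same conclusion, but by a genuinely different computation than the paper's. The paper does \emph{not} use the determinant/eigenvalue route. Instead it computes the per-slot mutual informations directly: in slot $l$ one has $I(x_{s_l,l};y_{t,l})\dot{=}(1-v_l)^+\log\rho$, and in slot $M+l$ one again gets $I(\check{\textbf{x}}_{M+l};y_{t,M+l})\dot{=}(1-v_l)^+\log\rho$, because the fresh symbol $x_{s_l,M+l}$ already pins the exponential order of $1+\rho\check{\textbf{H}}_{M+l}\check{\textbf{H}}_{M+l}^\dagger$ regardless of the relayed terms. Summing, the outage region is $\{2\sum_l(1-v_l)^+<2Mr\}$, whose cost $\inf\sum_l v_l$ equals $M(1-r)$; the inter-source exponents $u_{k,l}$ never enter the outage condition and are set to zero at the infimum. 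The upper bound is handled exactly as you do, via the genie-aided MISO bound.

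Your determinant route is a clean alternative: the block-triangular structure $\textbf{H}_t=\bigl(\begin{smallmatrix}\textbf{A}&\textbf{0}\\ \textbf{B}&M^{-1/2}\textbf{A}\end{smallmatrix}\bigr)$ gives $\det(\textbf{H}_t\textbf{H}_t^\dagger)=M^{-M}\prod_l|h_{s_l,t}|^4$ exactly, so the inequality $\sum_i(1-\gamma_i)^+\ge 2M-\sum_i\gamma_i$ immediately forces $\sum_l v_l>M(1-r)$ on the outage set. What this buys you is that you never have to justify the direction of a per-slot decomposition of $\log\det(\textbf{I}+\rho\textbf{H}_t\textbf{H}_t^\dagger)$; what it costs you is the explicit appearance of $\det\textbf{R}_{\textbf{w}}$. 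The paper disposes of that issue the same way for both CBC and CMA: it invokes a ``proper selection'' of the normalization factors so that each $\beta_{l,k}$ has exponential order zero, which makes the accumulated-noise covariance satisfy $\det\textbf{R}_{\textbf{w}}\dot{=}\rho^{0}$ on the whole nonnegative orthant of exponents, not merely on the dominating realization. With that choice your obstacle disappears uniformly and you do not need the case analysis you sketch. Finally, your remark about checking the single-user and subset MAC error events is more thorough than the paper, which simply observes that the joint ML error probability upper-bounds each source-specific error probability and then analyzes the sum-rate outage event alone.
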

\begin{proof}
Please refer to Appendix B.
\end{proof}
\begin{figure}
    \centering
    \subfigure[5 users.]{
    \includegraphics[scale=0.43]{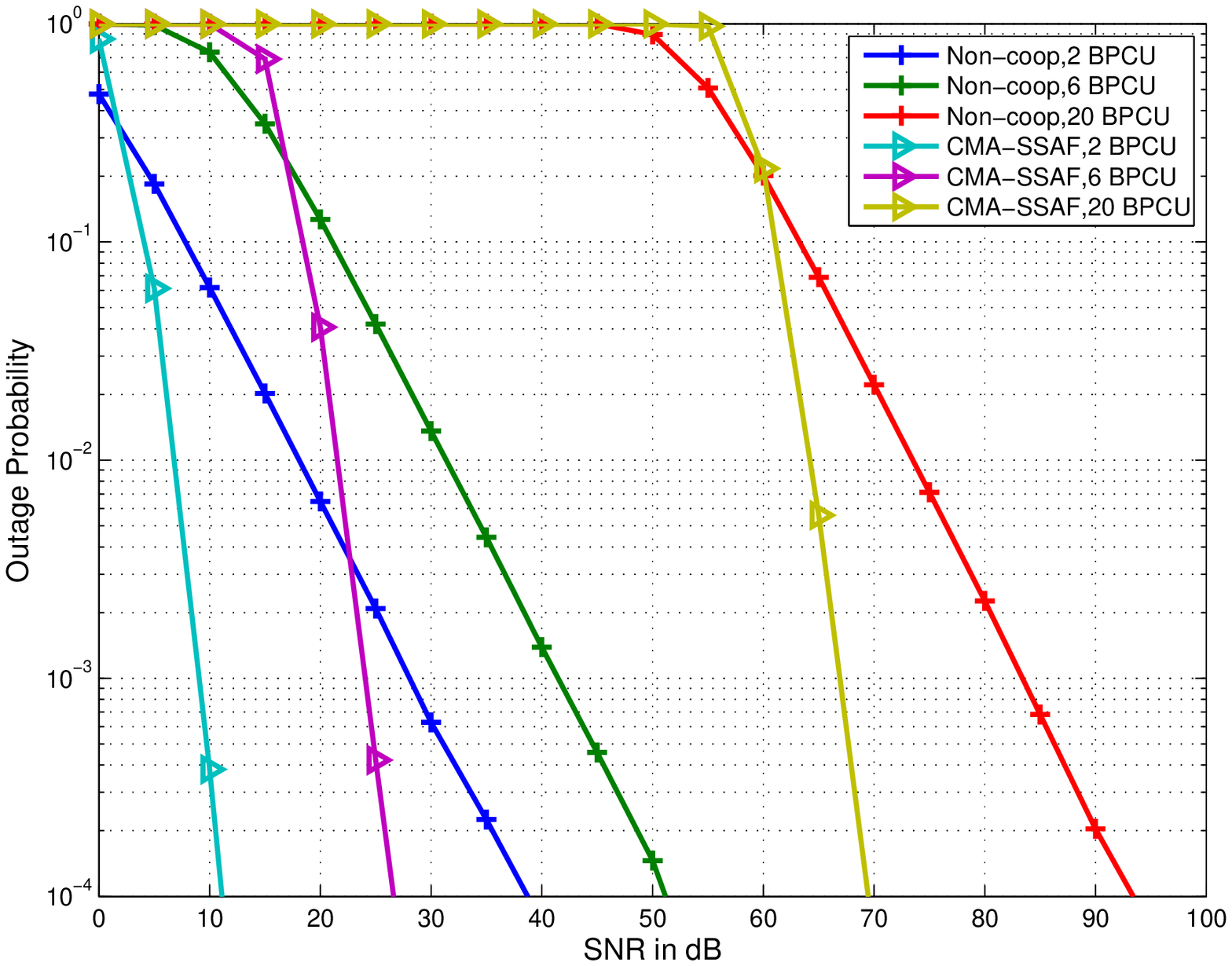}
    }
    \subfigure[10 users.]{
    \includegraphics[scale=0.43]{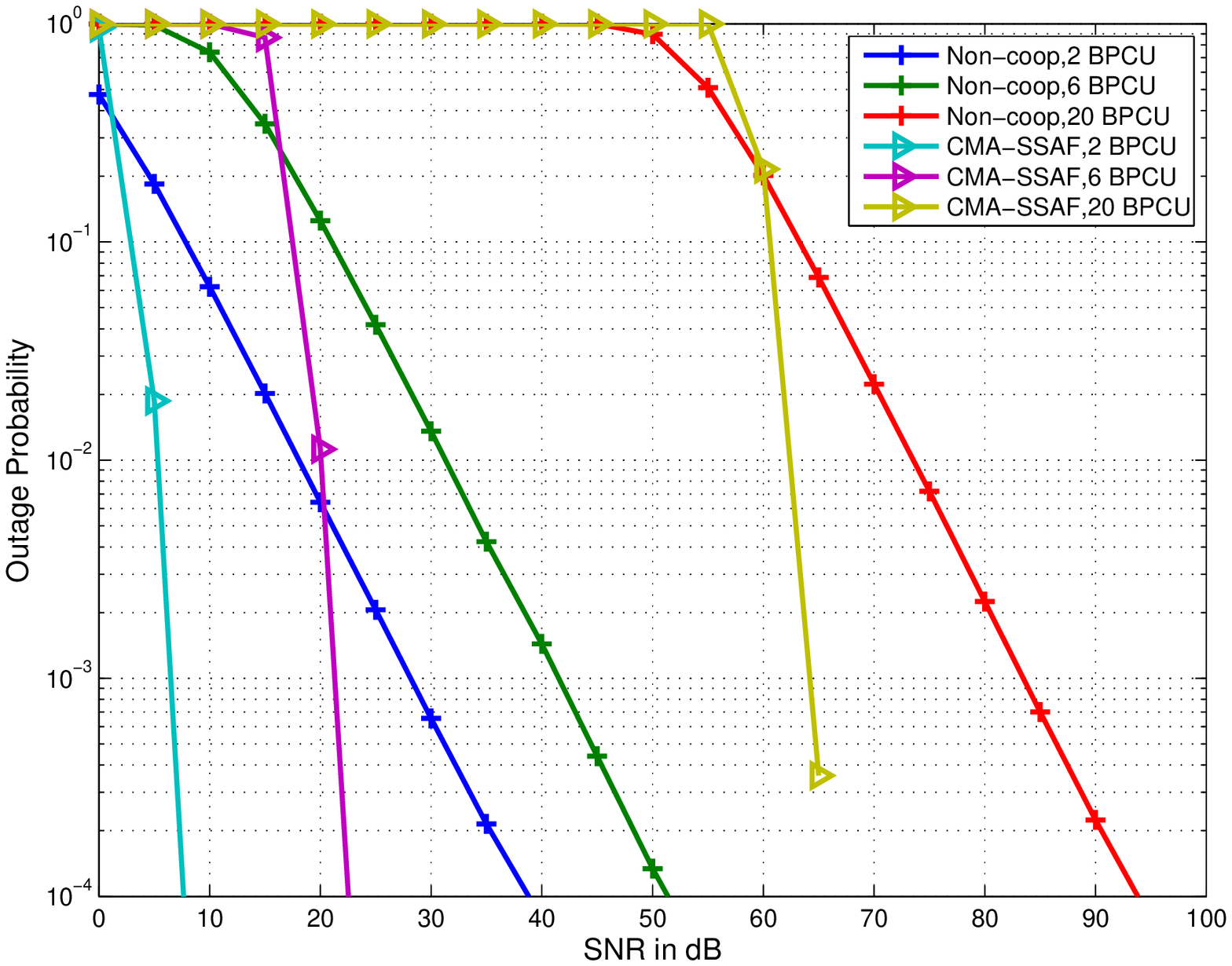}
    }
    \caption{Outage probabilities of non-cooperative and CMA-SSAF strategies.}
    \label{fig:outagecma}
\end{figure}
The CMA-SSAF strategy can exactly achieve the optimal DMT upper bound with any number of source nodes and finite cooperation frame length, and thus more suitable for real-time applications with any number of cooperative users. Fig. \ref{fig:outagecma} shows the outage behaviors of the CMA-SSAF strategy for CMAs with various numbers of cooperative users. It can be easily seen that the diversity order of CMA-SSAF is much higher than that of the non-cooperative strategy. Moreover, the slops of the outage curves remains the same even in the high spectral efficiency regime, which demonstrates the optimality of the CMA-SSAF strategy.
\section{Conclusion}
In this paper, we studied the class of SSAF strategies, which was previously used to achieve the DMT upper bound for CMR. We extended the use of SSAF strategy to CBC and CMA, and proposed two SSAF variants, i.e., CBC-SSAF and CMA-SSAF strategies for these two channel models respectively. Together with the fact that SSAF strategy is also optimal for CMR, we have proved that SSAF strategy can be used as a strategy which is universally optimal for all these three classical channel models. Previous research showed that the best known strategies for CMR, CBC and CMA are SSAF, DDF and NAF respectively. In practice, a wireless node can be a source node, a relay node or a destination node from time to time. Thus, from the whole network perspective, it is clearly not convenient in implementation to use a strategy which can only provide optimality for a particular channel model because a wireless node may have different roles and functions at different time. Besides, it is not attractive to frequently switch between the class of amplify and forward strategies and class of decode and forward strategies. Thus, finding a transmission strategy which is universally optimal for all these three channel models is very important. The main idea of SSAF strategy is to assign a node one or more time slots to transmit its own information and one or more time slots to amplify and forward its previously received signal, sequentially and with maximum use of the nonorthogonal signal subspace. Using such rules, one can design a SSAF variant which is optimal for every of those three classical channel models, i.e., CMR, CBC and CMA, which shows SSAF strategy has great potential to provide universal optimality for wireless cooperative networks.

For more general network topologies, e.g., multiple-source multiple-destination cooperative networks, how the SSAF strategy performs is a challenging yet interesting problem. Take a cooperative interference network for instance. The achievable optimal DMT for such a network itself is still an open problem. Intuitively, while naively applying SSAF at both transmitter and receiver sides can provide a destination node near-maximum diversity gain, its achievable multiplexing gain might not be optimal due to the non-desired interference problem. Thus, we conjecture SSAF might need to be used together with other techniques like interference alignment to achieve optimal DMT. This is no doubt a very interesting direction for future research.
\appendices
\section{Proof of Theorem 1}
Since there is no difference in processing different symbols, we assume each transmission contains only one symbol. The received signal vector for the destination node $t_l$ is
\begin{eqnarray}
\textbf{y}_{t_l}=\textbf{H}_{t_l,(N+1)\times{(N+1)}}\cdot{}\textbf{x}_{t_l,(N+1)\times1}+\textbf{n}_{t_l,(N+1)\times1}.
\end{eqnarray}

Assume the channel gain between $s$ and $t_l$ is $h_{s,t_l}$, for $1\leqslant{l}\leqslant{N}$, and the channel gain between $t_l$ and $t_{l+1}$ is $h_{t_l,t_{l+1}}$, for $1\leqslant{l\leqslant{N-1}}$. The relay pre-ordering algorithm aims to choose $|h_{t_l,t_{l+1}}|$ as small as possible. Under independent Rayleigh-fading channel realizations, when the number of destination nodes is large,  there is a high probability that a bad CER to NER link exists with $|h_{t_l,t_{l+1}}|$ being very small, for $1\leqslant{l\leqslant{N-1}}$. Thus, the signal from CER is small interference at NER, which can be viewed as small noise enhancement at NER. From the analysis for the accumulated noise later \eqref{eqn:noisedmt}, we know this small noise enhancement does not affect the DMT analysis. Thus, for analytical simplicity, we assume $|h_{t_l,t_{l+1}}|=0$. Under reciprocal channel realizations, we also have $|h_{t_{l+1},t_{l}}|=|h_{t_l,t_{l+1}}|=0$. Finally, due to the half-duplex constraint, we know that $|h_{t_l,t_l}|=0$, for $1\leqslant{l\leqslant{N}}$.

From the CBC-SSAF strategy description earlier, we know that $y_{t_l,l+1}=0$. Thus, the effective received signal vector changes to
\begin{eqnarray}
\tilde{\textbf{y}}_{t_l}=\tilde{\textbf{H}}_{t_l,N\times{N}}\cdot{}\tilde{\textbf{x}}_{t_l,N\times{1}}+\tilde{\textbf{n}}_{t_l,N\times{1}}
\label{eqn:effectivechannel}
\end{eqnarray}
where
\begin{eqnarray}
\tilde{\textbf{H}}_{{t_l},N\times{N}}=
\begin{pmatrix}
\textbf{A}_{{t_l},l\times{l}} & \textbf{0}_{{t_l},l\times{(N-l)}} \\
\textbf{0}_{{t_l},(N-l)\times{l}} & \textbf{B}_{{t_l},(N-l)\times{(N-l)}} \\
\end{pmatrix}_{N\times{N}},
\end{eqnarray}
\begin{eqnarray}
\textbf{A}_{t_l}=
\begin{pmatrix}
h_{s,t_l} & 0 & 0 & ... & 0 \\
h_{t_1,t_l}\beta_{t_1}h_{s,t_1} & h_{s,t_l} & 0 & ... & 0 \\
\multicolumn{5}{c}{......} \\
0 & ... & h_{t_{l-2},t_l}\beta_{t_{l-2}}h_{s,t_{l-2}} & h_{s,t_l} & 0\\
0 & ... & 0 & 0 & h_{s,t_l}
\end{pmatrix},
\end{eqnarray}
and
\begin{eqnarray}
\textbf{B}_{t_l}=
\begin{pmatrix}
h_{s,t_l} & 0 & ... & 0 \\
h_{t_{l+2},t_l}\beta_{t_{l+2}}h_{s,t_{l+2}} & h_{s,t_l} & ... & 0 \\
\multicolumn{4}{c}{......} \\
0 & ... & h_{t_N,t_l}\beta_{t_N}h_{s,t_N} & h_{s,t_l}
\end{pmatrix}.
\end{eqnarray}
$\beta_{t_l}$ denotes the normalization factor at destination node $t_l$ (the selected $l$th relay node) so that its forwarded signal satisfies its average energy constraint $E$.

Thus, we have:
\begin{enumerate}
\item For time slots $k=1,l$ and $(l+2)$
\begin{eqnarray}
y_{t_l,k}=h_{s,t_l}\cdot{}x_{s,k}+n_{t_l,k}.
\label{eqn:channel1}
\end{eqnarray}
\item For time slots $2\leqslant{}k\leqslant{}(N+1)$, $k\neq{}l,l+1$ and $(l+2)$
\begin{eqnarray}
y_{t_l,k} &= & \check{\textbf{H}}_{t_l,k}\cdot{}\check{\textbf{x}}_{t_l,k}+n_{t_l,k} \notag\\
&=&
\begin{pmatrix}
h_{t_{k-1},t_l}\beta_{t_{k-1}}h_{s,t_{k-1}} & h_{s,t_l}
\end{pmatrix}
\begin{pmatrix}
x_{s,k-1} \\
x_{s,k}
\end{pmatrix}+n_{t_l,k}.
\label{eqn:channel2}
\end{eqnarray}
\end{enumerate}

Note that, in the second case above, the term $n_{t_l,k}$ is actually the accumulated noise from both relay and destination nodes, which can be represented as
\begin{eqnarray}
n_{t_l,k}=\hat{n}_{t_l,k}+h_{t_{k-1},t_l}\beta_{t_{k-1}}n_{t_{k-1},k-1}
\label{eqn:anoise}
\end{eqnarray}
where the normalization factor $\beta_{t_{k-1}}$ at the relay node $t_{k-1}$ should satisfy the energy constraint
\begin{eqnarray}
|\beta_{t_{k-1}}|^2\leqslant\frac{E}{E|h_{s,t_{k-1}}|^2+\sigma^2}=\frac{\rho}{\rho|h_{s,t_{k-1}}|^2+1}.
\label{eqn:const}
\end{eqnarray}

Let $h$ be complex standard normal distributed and $v$ denotes the exponential order of $\frac{1}{|h|^2}$. The probability density function (pdf) of $v$ can be written as \cite{azarian}
\begin{eqnarray}
p_v\dot{=}
  \left\{
    \begin{array}{l l}
      \rho^{-\infty}=0, & \textrm{for} \:v<0\\
      \rho^{-v}, & \textrm{for} \:v\geqslant{0}.\\
    \end{array}
    \right.
\end{eqnarray} Thus, for $N$ independent identically distributed (i.i.d) variables $\{v_j\}_{j=1}^N$, the probability that $(v_1,...,v_N)$ belongs to a set $O$ is
\begin{eqnarray}
P_O\dot{=}\rho^{-d_O}, \quad \textrm{ for } \: d_O=\inf_{(v_1,...,v_N)\in{O^+}}\sum_{j=1}^Nv_j
\label{eqn:do}
\end{eqnarray} given that $\mathbb{R}^{N+}$ denotes the set of nonnegative $N$-tuples and $O^+=O\bigcap{}\mathbb{R}^{N+}$ is not empty. So, the exponential order of $P_O$ depends only on $O^+$ and is dominated by the realization with the largest exponential order.

From \cite{my1}, it can be shown that under the consideration of outage events belonging to set $O^+$, proper selection of the normalization factor $\beta_{t_{k-1}}$ will make its exponential order vanish in all the DMT analytical expressions and the noise enhancement problem does not affect the DMT result. Thus, the DMT of our proposed strategy depends only on the channel matrix and not on the variance of the accumulated noise. So, for analytical simplicity, we assume the accumulated noise equals to the noise at each destination node which does not affect the DMT analysis.

In order to get a DMT lower bound, we want to first upper-bound the probability of error of the ML decoder. Using Bayes' theorem, we can write
\begin{eqnarray}
P_E(\rho) & = & P_O(R)P_{E|O}+P_{E,O^c} \notag\\
& \leqslant & P_O(R)+P_{E,O^c}
\end{eqnarray}
where the outage events set $O$ and its complement set $O^c$ are chosen such that $P_O(R)$ dominates $P_{E,O^c}$, i.e.,
\begin{eqnarray}
P_{E,O^c}\dot{\leqslant{}}P_O(R).
\label{eqn:condition}
\end{eqnarray}
Thus, we have
\begin{eqnarray}
P_E(\rho)\dot{\leqslant{}}P_O(R).
\label{eqn:condd}
\end{eqnarray}
From \eqref{eqn:do} and the simplified channels described by \eqref{eqn:channel1} and \eqref{eqn:channel2}, we let
\begin{eqnarray}
P_O(R)\dot{=}\rho^{-d_O(r)},
\label{eqn:po}
\end{eqnarray}
for
\begin{eqnarray}
d_O(r)=\displaystyle{\inf_{(\textbf{v},\textbf{u})\in{O^{+}}}}[\displaystyle{\sum_{k=2,k\neq{l,l+1,l+2}}^{N+1}(v_{k-1}+u_{k-1})}+v_l].
\label{eqn:dor}
\end{eqnarray}
Thus, $d_O(r)$ provides a lower bound on the diversity gain achieved by our proposed CBC-SSAF strategy for destination node $t_l$.

From the definition of the outage probability \cite{telatar}, for the $N+1$ messages from the source node $s$ to the destination node $t_l$, we know that
\begin{eqnarray}
P_O(R)&=&P[I(\tilde{\textbf{x}}_{t_l};\tilde{\textbf{y}_{t_l}})<R] \notag\\
&\leqslant&P[\sum_{k=2,k\neq{l,l+1,l+2}}^{N+1}I(\check{\textbf{x}}_{t_l,k};y_{{t_l},k})+\sum_{k=1,l,l+2}I(x_{s,k};y_{{t_l},k})<(N+1)r\log\rho].
\label{eqn:outage}
\end{eqnarray}
For time slots $k=1,l$ and $(l+2)$
\begin{eqnarray}
\lim_{\rho\rightarrow\infty}\frac{I(x_{s,k};y_{t_l,k})}{\log\rho}=\lim_{\rho\rightarrow\infty}\frac{\log(1+\rho|h_{s,t_l}|^2)}{\log\rho}=(1-v_l)^{+}.
\label{eqn:11}
\end{eqnarray}
For time slots $2\leqslant{}k\leqslant{}N+1,k\neq{l,l+1,l+2}$
\begin{eqnarray}
&\,&\lim_{\rho\rightarrow\infty}\frac{I(\check{\textbf{x}}_{t_l,k};y_{t_l,k})}{\log\rho}=\lim_{\rho\rightarrow\infty}\frac{\log(1+\rho\check{\textbf{H}}_{t_l,k}\check{\textbf{H}}_{t_l,k}^{\dag})}{\log\rho}\notag\\
&=&\lim_{\rho\rightarrow\infty}\frac{\log(1+\rho|h_{t_{k-1},t_l}\beta_{t_{k-1}}h_{s,t_{k-1}}|^2+\rho|h_{s,t_l}|^2)}{\log\rho} \notag\\
&=&(\max\{1-v_{k-1}-u_{k-1},1-v_l\})^+
\label{eqn:22}
\end{eqnarray}
where $[\cdot]^\dag$ is used to denote the matrix conjugated transposition operation. Thus, from \eqref{eqn:outage}, \eqref{eqn:11} and \eqref{eqn:22}, the outage events set $O^+$ should be defined as
\begin{eqnarray}
&\,&\!\!\!\!\!\!\!\!\!O^+=\{(\textbf{v},\textbf{u})\in\mathbb{R}^{(2N-5)+}|3(1-v_l)^+\notag\\
&\,&\quad\:+\sum_{k=2,k\neq{l,l+1,l+2}}^{N+1}(\max\{1-v_{k-1}-u_{k-1},1-v_l\})^+<(N+1)r\}.
\label{eqn:condition11}
\end{eqnarray}

From \eqref{eqn:condition11}, we can easily see that, in order to let the outage events happen, the following constraints must be simultaneously satisfied (necessary but not sufficient conditions):
\begin{enumerate}
\item For $v_l$, we have
\begin{eqnarray}
v_l>(1-\frac{N+1}{N}r)^+.
\label{eqn:condition1}
\end{eqnarray}
\item For $\sum_{k=2,k\neq{l,l+1,l+2}}^{N+1}(v_{k-1}+u_{k-1})$, we have
\begin{eqnarray}
\sum_{k=2,k\neq{l,l+1,l+2}}^{N+1}(v_{k-1}+u_{k-1})>[(N-3)-(N+1)r]^+.
\label{eqn:condition2}
\end{eqnarray}
\end{enumerate}

From \eqref{eqn:condition} and \eqref{eqn:condd}, we see that \eqref{eqn:condition1} and \eqref{eqn:condition2} can be used to lower-bound the diversity gain $d(r)$ if and only if the outage events described by \eqref{eqn:condition11} dominate the probability of error of the ML decoder. To see this, we first observe that the channel described by \eqref{eqn:effectivechannel} can be seen as a coherent linear Gaussian channel as
\begin{eqnarray}
\tilde{\textbf{y}}_{t_l,N\times1}=
\tilde{\textbf{H}}_{t_l,N\times{N}}\cdot{}
\tilde{\textbf{x}}_{t_l,N\times1}
+\tilde{\textbf{n}}_{t_l,N\times1}.
\end{eqnarray}
Thus, the average pairwise error probability (PEP) of the ML decoder at high SNR regime can be upper-bounded by $[3,\text{ Eqn. } 7]$
\begin{eqnarray}
P_{PE}&\leqslant{}&\det(\mathbf{I}_N+\frac{1}{2}\rho\tilde{\textbf{H}}_{t_l}\tilde{\textbf{H}}_{t_l}^\dagger)^{-1} \notag \\
&\dot{=}&\det(\mathbf{I}_N+\rho\tilde{\textbf{H}}_{t_l}\tilde{\textbf{H}}_{t_l}^\dagger)^{-1}
\end{eqnarray}
where we set $\sum_{\tilde{\textbf{n}}_{t_l}}=\sigma^2\mathbf{I}_N$ because such manipulation does not affect the DMT. From \eqref{eqn:11}, \eqref{eqn:22} and the formula for multiple-input multiple-output channel capacity \cite{telatar,foschini}, we know that
\begin{eqnarray}
&\,&[\sum_{k=2,k\neq{l,l+1,l+2}}^{N+1}(\max\{1-v_{k-1}-u_{k-1},1-v_l\})^++3(1-v_l)^+]\log\rho \notag\\
&\dot{=}&\sum_{k=1,l,l+2}I(x_{s,k};y_{t_l,k})+\sum_{k=2,k\neq{l,l+1,l+2}}^{N+1}I(\check{\textbf{x}}_{t_l,k};y_{t_l,k}) \notag \\ &\leqslant{}&\log[\det(\mathbf{I}_N+\rho\tilde{\textbf{H}}_{t_l}\tilde{\textbf{H}}_{t_l}^\dagger)].
\end{eqnarray}
Thus,
\begin{eqnarray}
\rho^{\sum_{k=2,k\neq{l,l+1,l+2}}^{N+1}(\max\{1-v_{k-1}-u_{k-1},1-v_l\})^++3(1-v_l)^+}\dot{\leqslant}\det(\mathbf{I}_N+\rho\tilde{\textbf{H}}_{t_l}\tilde{\textbf{H}}_{t_l}^\dagger)
\end{eqnarray}
and we can upper-bound $P_{PE}$ as
\begin{eqnarray}
P_{PE}&\dot{\leqslant}&\det(\mathbf{I}_N+\rho\tilde{\textbf{H}}_{t_l}\tilde{\textbf{H}}_{t_l}^\dagger)^{-1}\notag\\
&\dot{\leqslant}&\rho^{-\sum_{k=2,k\neq{l,l+1,l+2}}^{N+1}(\max\{1-v_{k-1}-u_{k-1},1-v_l\})^+-3(1-v_l)^+}.
\end{eqnarray}

Let the $N+1$ messages from the source node $S$ to the destination node $r_i$ form a codeword of length $N+1$. The data rate is $R=(N+1)r\log\rho$ BPCU and we have a total of $\rho^{(N+1)r}$ codewords. Thus, we can bound $P_E(\rho)$ as
\begin{eqnarray}
P_E(\rho)&\dot{\leqslant}&\rho^{-\sum_{k=2,k\neq{l,l+1,l+2}}^{N+1}(\max\{1-v_{k-1}-u_{k-1},1-v_l\})^+-3(1-v_l)^++(N+1)r}.
\end{eqnarray}
Therefore, $P_{E,O^c}$ can be written as \cite{azarian}
\begin{eqnarray}
P_{E,O^c}\dot{\leqslant}\int_{O^{c+}}\rho^{-d_e(r,\textbf{v},\textbf{u})}\textbf{d}_{\textbf{v}}\textbf{d}_\textbf{u}
\end{eqnarray}
where
\begin{eqnarray}
&\,&\!\!\!\!\!\!\!\!\!\!\!\!\!\!d_e(r,\textbf{v},\textbf{u})=-(N+1)r+3(1-v_l)^+\notag\\
&\,&\,\,\,\,\,\,\,+\sum_{k=2,k\neq{l,l+1,l+2}}^{N+1}(\max\{1-v_{k-1}-u_{k-1},1-v_l\})^+ \notag \\
&\,&\,\,\,\,\,\,\,+[\displaystyle{\sum_{k=2,k\neq{l,l+1,l+2}}^{N+1}(v_{k-1}+u_{k-1})}+v_l].
\end{eqnarray}
Because $P_{E,O^c}$ is dominated by the smallest term of $d_e(r,\textbf{v},\textbf{u})$ over $O^{c+}$, we can write
\begin{eqnarray}
P_{E,O^c}\dot{\leqslant}\rho^{-d_e(r)}, \quad \text{ for } d_e(r)=\inf_{(\textbf{v},\textbf{u})\in{O^{c+}}}d_e(r,\textbf{v},\textbf{u}).
\label{eqn:der}
\end{eqnarray}
Comparing \eqref{eqn:po} and \eqref{eqn:der}, we see that for \eqref{eqn:condition} to be met, $O^+$ should be defined as
\begin{eqnarray}
&\,&\!\!\!\!\!\!\!\!\!O^+=\{(\textbf{v},\textbf{u})\in\mathbb{R}^{(2N-5)+}|3(1-v_l)^+\notag\\
&\,&\quad\:+\sum_{k=2,k\neq{l,l+1,l+2}}^{N+1}(\max\{1-v_{k-1}-u_{k-1},1-v_l\})^+\leqslant{}(N+1)r\}
\end{eqnarray}
which contains the outage events set \eqref{eqn:condition11}. So, we conclude that the outage events described by \eqref{eqn:condition11} also satisfy \eqref{eqn:condition} and therefore dominate the probability of error of the ML decoder. Thus, we can use \eqref{eqn:condition1} and \eqref{eqn:condition2} to lower-bound $d_O(r)$ (which further provides a lower bound for $d(r)$) as
\begin{eqnarray}
d(r)\geqslant{}d_O(r)>[(N-3)-(N+1)r]^++(1-\frac{N+1}{N}r)^+.
\end{eqnarray}
\section{Proof of Theorem 2}
Similar to the reasons for CBC, we assume each transmission contains only one symbol. The received signal vector at the destination node $t$ can be written as
\begin{eqnarray}
\textbf{y}_t=\textbf{H}_{t,2M\times{2M}}\cdot\textbf{x}_{2M\times{1}}+\textbf{n}_{t,2M\times1}\notag\\
\label{eqn:cmasignal}
\end{eqnarray}
where
\begin{eqnarray}
\textbf{H}_t=
\begin{pmatrix}
\textbf{A} & \textbf{0} \\
\textbf{B} & \frac{1}{\sqrt{M}}\textbf{A}
\end{pmatrix},
\end{eqnarray}
\begin{eqnarray}
\textbf{A}=
\begin{pmatrix}
h_{s_1,t} & 0 & 0 & ... & 0 & 0 \\
0 & h_{s_2,t} & 0 & ... & 0 & 0 \\
0 & 0 & h_{s_3,t} & ... & 0 & 0 \\
\multicolumn{6}{c}{......} \\
0 & 0 & 0 & 0 & ... & h_{s_M,t}
\end{pmatrix}
\end{eqnarray}
and
\begin{eqnarray}
\textbf{B}=\frac{1}{\sqrt{M}}
\begin{pmatrix}
0 & h_{s_1,t}\beta_{1,2}h_{s_2,s_1} & h_{s_1,t}\beta_{1,3}h_{s_3,s_1} & ... & h_{s_1,t}\beta_{1,M}h_{s_M,s_1}\\
h_{s_2,t}\beta_{2,1}h_{s_1,s_2} & 0 & h_{s_2,t}\beta_{2,3}h_{s_3,s_2} & ... & h_{s_2,t}\beta_{2,M}h_{s_M,s_2}\\
\multicolumn{5}{c}{......} \\
h_{s_M,t}\beta_{M,1}h_{s_1,s_M} & h_{s_M,t}\beta_{M,2}h_{s_2,s_M} & ... & h_{s_M,t}\beta_{M,M-1}h_{s_{M-1},s_M} & 0\\
\end{pmatrix}.
\end{eqnarray}
$\beta_{l,k}$, for $1\leqslant{l}\neq{k}\leqslant{M}$, is the normalization factor for the signal from source node $s_k$ to source node $s_l$, so that the scaled signal satisfies the average energy constraint $E$. The term $\frac{1}{\sqrt{M}}$ comes from the equal power allocation scheme. Thus, we have:
\begin{enumerate}
\item In the $l$th time slot, for $1\leqslant{l}\leqslant{M}$,
\begin{eqnarray}
y_{t,l}=h_{s_l,t}x_{s_l,l}+n_{t,l}.
\label{eqn:lth}
\end{eqnarray}
\item In the $(M+l)$th time slot, for $1\leqslant{l}\leqslant{M}$,
\begin{eqnarray}
y_{t,M+l}&=&\check{\textbf{H}}_{M+l}\cdot\check{\textbf{x}}_{M+l}+n_{t,M+l}\notag\\
&=&\frac{1}{\sqrt{M}}h_{s_l,t}(x_{s_l,M+l}+\sum_{k=1,k\neq{l}}^{M}\beta_{l,k}h_{s_k,s_l}x_{s_k,k})+n_{t,M+l},
\label{eqn:m+lth}
\end{eqnarray}
\end{enumerate}
where
\begin{eqnarray}
\check{\textbf{H}}_{M+l}=\frac{1}{\sqrt{M}}h_{s_l,t}
\begin{pmatrix}
1 & \beta_{l,1}h_{s_1,s_l} & \beta_{l,2}h_{s_2,s_l} & ... & \beta_{l,l-1}h_{s_{l-1},s_l} & \beta_{l,l+1}h_{s_{l+1},s_l} & ... & \beta_{l,M}h_{s_M,s_l}
\end{pmatrix},
\end{eqnarray}
\begin{eqnarray}
\check{\textbf{x}}_{M+l}=
\begin{pmatrix}
x_{s_l,M+l} & x_{s_1,1} & x_{s_2,2} & ... & x_{s_{l-1},l-1} & x_{s_{l+1},l+1} & ... & x_{s_M,M}
\end{pmatrix}^T
\end{eqnarray}
and $[\cdot]^T$ is used to denote the matrix transposition operation.

Note that the noise terms in \eqref{eqn:cmasignal} and \eqref{eqn:m+lth} are actually the accumulated noise from both the source nodes themselves and the relayed signal from other source nodes. Similar to the analysis for the accumulated noise in CBC-SSAF, the exponential order of $\beta_{l,k}$, for $1\leqslant{l}\neq{k}\leqslant{M}$, will vanish in all the DMT analytical expressions if we choose the normalization factors properly. Thus, the DMT of our proposed CMA-SSAF strategy depends only on the channel matrix and not on the variance of the accumulated noise.

It is easy to see that the error probability of the joint ML decoder upper-bounds the error probabilities of the source-specific ML decoders and provides a lower bound on each source-specific diversity gain \cite{azarian}. Thus, the error probability of the joint ML decoder also serves a lower bound on the overall achievable diversity gain and we only consider using this worst-performance decoder to give a DMT lower bound.
From \eqref{eqn:do} and the simplified channels described by \eqref{eqn:lth} and \eqref{eqn:m+lth}, we can write:
\begin{eqnarray}
P_O(R)\dot=\rho^{-d_O(r)},
\end{eqnarray}
for
\begin{eqnarray}
d_O(r)=\inf_{\textbf{v,u}\in{O^+}}\sum_{l=1}^M[v_l+\sum_{k=1,k\neq{l}}^Mu_{k,l}],
\label{eqn:outageexpression}
\end{eqnarray}
where $v_l$ denotes the exponential order of $\frac{1}{|h_{s_l,t}|^2}$ and $u_{k,l}$ denotes the exponential order of $\frac{1}{|h_{s_k,s_l}|^2}$. Thus, $d_O(r)$ provides a lower bound on the diversity gain achieved by CMA-SSAF strategy for any of the source nodes.

Because we want to use the joint ML decoder to give a DMT lower bound, for the $2M$ messages transmitted in the $2M$ time slots in one cooperation frame, the outage probability of the joint ML decoder can be written as
\begin{eqnarray}
P_O(R)&=&P[I(\textbf{x};\textbf{y}_{t})<R]\notag\\
&\leqslant{}&P[\sum_{l=1}^MI(x_{s_l,l};y_{t,l})+\sum_{l=1}^MI(\check{\textbf{x}}_{M+l};y_{t,M+l})<2Mr\log\rho].
\end{eqnarray}
From \eqref{eqn:lth}, we know that in the $l$th time slot, for $1\leqslant{l}\leqslant{M}$,
\begin{eqnarray}
\lim_{\rho\rightarrow\infty}\frac{I(x_{s_l,l};y_{t,l})}{\log\rho}=\lim_{\rho\rightarrow\infty}\frac{\log(1+\rho|h_{s_l,t}|^2)}{\log\rho}=(1-v_l)^+.
\label{eqn:lthd}
\end{eqnarray}
From \eqref{eqn:m+lth}, we know that in the $M+l$th time slot, for $1\leqslant{l}\leqslant{M}$,
\begin{eqnarray}
&&\lim_{\rho\rightarrow\infty}\frac{I(\check{\textbf{x}}_{M+l};y_{t,M+l})}{\log\rho}=\lim_{\rho\rightarrow\infty}\frac{\log(1+\rho\check{\textbf{H}}_{M+l}\check{\textbf{H}}_{M+l}^\dag)}{\log\rho}\notag\\
&=&\lim_{\rho\rightarrow\infty}\frac{\log[1+\frac{\rho|h_{s_l,t}|^2}{M}(1+|\beta_{l,1}h_{s_1,s_l}|^2+...+|\beta_{l,l-1}h_{s_{l-1},s_l}|^2+|\beta_{l,l+1}h_{s_{l+1},s_l}|^2+...)]}{\log\rho}\notag\\
&=&(1-v_l)^+.
\label{eqn:m+lthd}
\end{eqnarray}
Thus, it is clear that the outage events set $O^+$ should be defined as
\begin{eqnarray}
O^+=\{(\textbf{v,u})\in{}\mathbb{R}^{M^2,+}|2M(1-v_l)^+<2Mr\}.
\label{eqn:outagecma}
\end{eqnarray}
Consequently, we know that in order to let the outage events happen, we must have
\begin{eqnarray}
v_l>1-r.
\label{eqn:vl}
\end{eqnarray}

Using similar techniques for CBC-SSAF, we can prove the outage events described by \eqref{eqn:outagecma} dominate the probability of error of the joint ML decoder. Thus, we can use \eqref{eqn:vl} to lower-bound $d_O(r)$ and $d(r)$ as
\begin{eqnarray}
d(r)\geqslant{}d_O(r)=\inf_{\textbf{v,u}\in{O^+}}\sum_{l=1}^M[v_l+\sum_{k=1,k\neq{l}}^Mu_{k,l}]>M(1-r).
\label{eqn:dmtcma}
\end{eqnarray}

Finally, the genie-aided MISO channel provides a DMT upper bound for every CMA so that
\begin{eqnarray}
d(r)\leqslant{}M(1-r).
\end{eqnarray}
Thus, we conclude that the DMT of our proposed CMA-SSAF strategy is exactly
\begin{eqnarray}
d(r)=M(1-r).
\end{eqnarray}
\bibliographystyle{IEEEtran}
\bibliography{IEEEtrans}

\begin{thebibliography}{10}
\providecommand{\url}[1]{#1}
\csname url@samestyle\endcsname
\providecommand{\newblock}{\relax}
\providecommand{\bibinfo}[2]{#2}
\providecommand{\BIBentrySTDinterwordspacing}{\spaceskip=0pt\relax}
\providecommand{\BIBentryALTinterwordstretchfactor}{4}
\providecommand{\BIBentryALTinterwordspacing}{\spaceskip=\fontdimen2\font plus
\BIBentryALTinterwordstretchfactor\fontdimen3\font minus
  \fontdimen4\font\relax}
\providecommand{\BIBforeignlanguage}[2]{{%
\expandafter\ifx\csname l@#1\endcsname\relax
\typeout{** WARNING: IEEEtran.bst: No hyphenation pattern has been}%
\typeout{** loaded for the language `#1'. Using the pattern for}%
\typeout{** the default language instead.}%
\else
\language=\csname l@#1\endcsname
\fi
#2}}
\providecommand{\BIBdecl}{\relax}
\BIBdecl

\bibitem{sen1}
A.~Sendonaris, E.~Erkip, and B.~Aazhang, ``User cooperation diversity-part
  \textrm{I}: System description,'' \emph{IEEE Trans. Commun.}, vol.~51,
  no.~11, pp. 1927--1938, Nov. 2003.

\bibitem{sen2}
------, ``User cooperation diversity-part \textrm{II}: Implementation aspects
  and performance analysis,'' \emph{IEEE Trans. Commun.}, vol.~51, no.~11, pp.
  1939--1948, Nov. 2003.

\bibitem{janani}
M.~Janani, A.~Hedayat, T.~Hunter, and A.~Nosratinia, ``Coded cooperation in
  wireless communications: space-time transmission and iterative decoding,''
  \emph{IEEE Trans. Signal Process.}, vol.~52, no.~2, pp. 362--371, Feb. 2004.

\bibitem{nabar}
R.~U. Nabar, H.~Bolcskei, and F.~W. Kneubuhler, ``Fading relay channels:
  performance limits and space-time signal design,'' \emph{IEEE J. Select.
  Areas Commun.}, vol.~22, no.~6, pp. 1099--1109, Aug. 2004.

\bibitem{bletsas}
A.~Bletsas, A.~Khisti, D.~Reed, and A.~Lippman, ``A simple cooperative
  diversity method based on network path selection,'' \emph{IEEE J. Selec.
  Areas Commun.}, vol.~24, no.~3, pp. 659--672, Mar. 2006.

\bibitem{mitran}
P.~Mitran, H.~Ochiai, and V.~Tarokh, ``Space-time diversity enhancements using
  collaborative communications,'' \emph{IEEE Trans. Inf. Theory}, vol.~51,
  no.~6, pp. 2041--2057, Jun. 2005.

\bibitem{prasad}
N.~Prasad and M.~Varanasi, ``Diversity and multiplexing tradeoff bounds for
  cooperative diversity protocols,'' in \emph{Proceedings. International
  Symposium on Information Theory, 2004. ISIT 2004.}, Jun.-2 Jul. 2004, p. 268.

\bibitem{kramer}
G.~Kramer, M.~Gastpar, and P.~Gupta, ``Cooperative strategies and capacity
  theorems for relay networks,'' \emph{IEEE Trans. Inf. Theory}, vol.~51,
  no.~9, pp. 3037--3063, Sep. 2005.

\bibitem{stefanov}
A.~Stefanov and E.~Erkip, ``Cooperative coding for wireless networks,''
  \emph{IEEE Trans. Commun.}, vol.~52, no.~9, pp. 1470--1476, Sep. 2004.

\bibitem{liu}
R.~Liu, P.~Spasojevic, and E.~Soljanin, ``Incremental redundancy cooperative
  coding for wireless networks: Cooperative diversity, coding, and transmission
  energy gains,'' \emph{IEEE Trans. Inf. Theory}, vol.~54, no.~3, pp.
  1207--1224, Mar. 2008.

\bibitem{azarian}
K.~Azarian, H.~G. Gamal, and P.~Schniter, ``On the achievalbe
  diversity-multiplexing tradeoff in half-duplex cooperative channels,''
  \emph{IEEE Trans. Inf. Theory}, vol.~51, no.~12, pp. 4152--4172, Dec. 2005.

\bibitem{tse}
L.~Zheng and D.~N.~C. Tse, ``Diversity and multiplexing: A fundamental tradeoff
  in multiple-antenna channels,'' \emph{IEEE Trans. Inf. Theory}, vol.~49,
  no.~5, pp. 1073--1096, May 2003.

\bibitem{shengyang}
S.~Yang and J.-C. Belfiore, ``Towards the optimal amplify-and-forward
  cooperative diversity scheme,'' \emph{IEEE Trans. Inf. Theory}, vol.~53,
  no.~9, pp. 3114--3126, Sep. 2007.

\bibitem{laneman1}
J.~N. Laneman and G.~W. Wornell, ``Distributed space-time-coded protocols for
  exploiting cooperative diversity in wireless networks,'' \emph{IEEE Trans.
  Inf. Theory}, vol.~49, no.~10, pp. 2415--2425, Oct. 2003.

\bibitem{laneman}
J.~N. Laneman, D.~N.~C. Tse, and G.~W. Wornell, ``Cooperative diversity in
  wireless networks: Efficient protocols and outage behavior,'' \emph{IEEE
  Trans. Inf. Theory}, vol.~50, no.~12, pp. 3062--3080, Dec. 2004.

\bibitem{telatar}
I.~E. Telatar, ``Capacity of multi-antenna gaussian channels,'' \emph{European
  Trans. on Telecommun.}, vol.~10, no.~6, pp. 585--596, Nov. 1999.

\bibitem{jindal}
N.~Jindal, S.~Vishwanath, and A.~Goldsmith, ``On the duality of gaussian
  multiple-access and broadcast channels,'' \emph{IEEE Trans. Inf. Theory},
  vol.~50, no.~5, pp. 768--783, May 2004.

\bibitem{jafar}
S.~A. Jafar, K.~S. Gomadam, and C.~Huang, ``Duality and rate optimization for
  multiple access and broadcast channels with amplify-and-forward relays,''
  \emph{IEEE Trans. Inf. Theory}, vol.~53, no.~10, pp. 3350--3370, Oct. 2007.

\bibitem{my1}
H.~Ning, C.~Ling, and K.~Leung, ``Near-optimal relaying strategy for
  cooperative broadcast channels,'' in \emph{IEEE International Conference on
  Communications, 2009. ICC '09}, Jun. 2009, pp. 1--5.

\bibitem{foschini}
G.~J. Foschini, ``Capacity of multi-antenna gaussian channels,'' \emph{Bell
  Labs Technical Journal}, vol.~1, no.~2, pp. 41--59, 1996.

\end{thebibliography}
\end{document}